\newtheorem{theorem}{Theorem}
\newtheorem{lemma}[theorem]{Lemma}
\newtheorem{example}[theorem]{Example}
\long\def\symbolfootnote[#1]#2{\begingroup
\def\thefootnote{\fnsymbol{footnote}}\footnote[#1]{#2}\endgroup}
\renewcommand{\paragraph}[1]{{\bf #1}}
\title{Set Transformation: Trade-off Between Repair Bandwidth and Sub-packetization}
\author{Hao Shi$^\dagger$, Zhengyi Jiang$^\dagger$,  Zhongyi Huang$^\dagger$,
Bo Bai$^\ddagger$, Gong Zhang$^\ddagger$, and Hanxu Hou$^\ddagger$$^{\star}$\\
$^\dagger$ Department of Mathematics Sciences, Tsinghua University, Beijing, China \\
$^\ddagger$ Theory Lab, Central Research Institute, 2012 Labs, Huawei Tech. Co. Ltd., Hong Kong SAR
}
\begin{document}
\let\emph\textit
\maketitle
\pagestyle{empty}  
\thispagestyle{empty} 
\symbolfootnote[0]{
$^{\star}$: Corresponding author.
}
\begin{abstract}
Maximum distance separable (MDS) codes facilitate the achievement of elevated levels of fault tolerance in storage systems while incurring minimal redundancy overhead. 
Reed-Solomon (RS) codes are typical MDS codes with the sub-packetization level being one, however, they require large repair bandwidth defined as the total amount of symbols downloaded from other surviving nodes during single-node failure/repair.
In this paper, we present the  {\em set transformation}, which can transform any MDS code into set transformed code such that (i) the sub-packetization level is flexible and ranges from 2 to $(n-k)^{\lfloor\frac{n}{n-k}\rfloor}$ in which $n$ is the number of nodes and $k$ is the number of data nodes, (ii) the new code is MDS code, (iii) the new code has lower repair bandwidth for any single-node failure.
We show that our set transformed codes have both lower repair bandwidth and lower field size than the existing related MDS array codes, such as elastic transformed codes \cite{10228984}. Specifically, our set transformed codes have $2\%-6.6\%$ repair bandwidth reduction compared with elastic transformed codes \cite{10228984} for the evaluated typical parameters.


\end{abstract}

\IEEEpeerreviewmaketitle

\section{Introduction}


Maximum distance separable (MDS) codes are widely employed in storage systems to provide optimal trade-off between storage overhead and fault tolerance. 
An $(n,k,\alpha)$ MDS array code encodes $k\alpha$
{\em data symbols} into $n\alpha$ {\em coded symbols} that are equally stored in $n$ nodes,
where each node stores $\alpha$ symbols. We call the number of symbols stored in each node
as the sub-packetization level. The $(n,k,\alpha)$ MDS array codes satisfy the {\em MDS property},
that is, any $k$ out of $n$ nodes can retrieve all $k\alpha$ data symbols. The codes are called {\em systematic codes} if the $k\alpha$ data symbols are included in the $n\alpha$ coded symbols. Since non-systematic codes can be converted into systematic codes through linear transformation, we will not make a distinction in this paper.
Reed-Solomon (RS) codes \cite{reed1960} are typical MDS array codes with the sub-packetization level $\alpha = 1$. 

Repair bandwidth is an important metric since node failure is common in distributed storage systems. 
It is shown in \cite{ford2010} that single-node failure occurs 
more frequently among all failures. It is important to
repair the failed node with the repair bandwidth as small as possible \cite{dimakis2010}.
Dimakis \emph{et al.} showed that we can repair the failed node by accessing at least 
$\frac{\alpha}{n-k}$ symbols from each of the other $n-1$ surviving nodes and the MDS array 
codes achieving the minimum repair bandwidth $\frac{(n-1)\alpha}{n-k}$ are called {\em minimum storage regenerating} 
(MSR) codes. Many MSR codes have been proposed \cite{rashmi2011,tamo2013,hou2016,2017Explicit,li2018,2018A,hou2019a,hou2019b}. 
However, high-code-rate (i.e., $\frac{k}{n}>\frac{1}{2}$) MSR codes \cite{2018A} require that the sub-packetization 
level increases exponentially with parameters $n$ and $k$.
It is practically important to design high-code-rate MDS array codes with repair bandwidth
as small as possible, for a given small sub-packetization level.

The piggybacking framework  \cite{2017Piggybacking} proposed by Rashmi \emph{et al.} can generate MDS array codes with low repair bandwidth and low sub-packetization level over a small field size. Many follow-up piggybacking codes were proposed \cite{2019AnEfficient,2021piggyback,2022Piggyback,2023jzy, 10495316}. 
However, how the piggybacking framework used to maintain the MDS property limits the repair bandwidth reduction of piggybacking codes.

To further reduce the repair bandwidth, many MDS array codes with richer structures have been proposed \cite{2023gcplus,8025778,8006804,10228984,2023wk1,2023wk2}.
Shi \emph{et al.} \cite{2023gcplus} proposed piggybacking+ codes via efficient transformation to reach lower repair bandwidth than piggybacking codes over a small field size and linear sub-packetization level.
HashTag Erasure Codes (HTEC) \cite{8025778} have flexible sub-packetization $2\leq \alpha\leq r^{\left \lceil \frac{k}{r} \right \rceil }$ and low repair bandwidth.
However, the construction of HTEC is not explicit and HTEC only considered efficient repair for data nodes. Moreover, the field size of HTEC is large ($O(\binom{n}{k}(n-k)\alpha)$) for maintaining MDS property.
Wang \emph{et al.} \cite{2023wk1,2023wk2} proposed Bidirectional Piggybacking Design (BPD) with small sub-packetization $2\leq \alpha\leq r$ to reduce the repair bandwidth for each node over a relatively small field size.

Li \emph{et al.} proposed the {\em base transformation} \cite{8006804} that can convert a general MDS code called base code into a new MDS code with reduced repair bandwidth by increasing sub-packetization level without altering the finite field size. 
However, base transformation \cite{8006804} has two limitations: (i) it is not suitable for binary MDS codes; (ii) the sub-packetization level of the transformed code should be $r$ times that of the base code. 
Recently, elastic transformation proposed in \cite{10228984} can generate elastic transformed MDS codes with flexible sub-packetization level $2\leq \alpha\leq r^{\lfloor\frac{n}{r}\rfloor}$ to achieve small repair bandwidth for each node. To maintain the MDS property, the field size is increased to $O(2(\binom{n - 1}{k - 1} - \binom{\lceil \frac{n}{\alpha*} \rceil - 1}{\lceil \frac{k}{\alpha*} \rceil - 1}))$, where $2 \leq \alpha* \leq r$\cite{10228984}.


In this paper, we present a new transformation called {\em set transformation} that can generate set transformed MDS codes with lower repair bandwidth than the existing related codes.
Compared with elastic transformation, our set transformation has two advantages. First, our set transformed codes have lower repair bandwidth than elastic transformed codes. Second, the field size of our set transformed codes is lower than that of elastic transformed codes.
The essential reason for our set transformed codes obtaining the above two advantages is as follows.
Elastic transformation converts a non-square array into a square array by performing base transformation \cite{8006804} twice. While in our set 
transformation, we first skillfully divide the non-square array into some square arrays, and then employ the base transformation \cite{8006804} for each square array. 

\section{Set Transformation}

In this section, we present the construction of set transformation, which can be used to generate MDS array codes with lower repair bandwidth.

Our set transformation can transform any $\alpha \times \beta$ array into another $\alpha \times \beta$ array, where $\alpha$ and $\beta$ are positive integers with  $\alpha \leq \beta$. For easier presentation, in the following, we assume that $\alpha \leq \beta < 2\alpha$.
Denote these $\alpha\beta$ symbols of the $\alpha \times \beta$ array as $\{b_{i,j}\}_{i = 1,2,\ldots,\alpha}^{j= 1,2,\ldots,\beta}$. For any $i\in\{1,2,\ldots,\alpha\}$ and $j\in\{1,2,\ldots,\beta\}$, $b_{i,j}$ represents the symbol in the $i$-th row and $j$-th column.


The construction of general set transformation is divided into the following three steps.
\begin{itemize}
\item \textbf{ Step one (Sub-array Allocation):} Divide the $\alpha \times \beta$ array into the $(2\alpha - \beta) \times (2\alpha - \beta)$ sub-array $A$, $(2\alpha - \beta) \times (2\beta - 2\alpha)$ sub-array $B_1$, $(\beta - \alpha) \times (2\alpha - \beta)$ sub-array $B_2$ and $(\beta - \alpha) \times (2\beta - 2\alpha)$ sub-array $C$. Fig. \ref{ABC} shows the sub-array allocation. Although we can divide the array into some other sub-arrays, the repair bandwidth of the codes with the above allocation is minimum among all the allocations.
\item \textbf{ Step two (Set Allocation):}  Divide the $\alpha \beta$ symbols in the $\alpha \times \beta$ array into  $\alpha^2$ sets $R_{i,j}$, where $i,j\in \{1,2,\ldots,\alpha\}$. Specifically, let $R_{i,j}$ be
   \begin{equation}
        \begin{cases}
        \{b_{i,j}\} & 
    1 \leq j \leq 2\alpha -\beta; \\
        \{b_{i,2j-2\alpha+\beta - 1}, b_{i,2j-2\alpha+\beta}\} &2\alpha - \beta + 1\leq j \leq \alpha,
            \end{cases}
    \end{equation}
where $1 \leq i \leq \alpha$.
There are $\alpha\times (2\alpha - \beta)$ sets containing one symbol, and the remaining $\alpha\times (\beta-\alpha)$ sets containing two symbols.
Therefore, $A$ contains $(2\alpha-\beta)^2$ sets $\{R_{i,j}\}_{i=1,2,\ldots,2\alpha-\beta}^{j=1,2,\ldots,2\alpha-\beta}$; $B_1$ contains $(2\alpha-\beta)(\beta-\alpha)$ sets $\{R_{i,j}\}_{i=1,2,\ldots,2\alpha-\beta}^{j=2\alpha - \beta + 1, \ldots, \alpha}$; $B_2$ contains $(2\alpha-\beta)(\beta-\alpha)$ sets $\{R_{i,j}\}_{i=2\alpha -\beta + 1,\ldots, \alpha}^{j=1,2,\ldots,2\alpha-\beta}$; $C$ contains $(\beta-\alpha)^2$ sets $\{R_{i,j}\}_{i=2\alpha-\beta + 1,\ldots,\alpha}^{j=2\alpha-\beta + 1,\ldots,\alpha}$. 
\item \textbf{ Step three (Set Pairwise Combination):} For $i\neq j\in\{ 1,2,\ldots,\alpha\}$, define $R_{i,j}$ and $R_{j,i}$ as a pair of {\em coupled sets}.
We perform linear combinations for all the symbols in the coupled sets $R_{i,j}$ and $R_{j,i}$ for $i\neq j$ to update the symbols to be $\{b'_{i,j}\}_{i=1,2,\ldots,\alpha}^{j=1,2,\ldots,\beta}$ and the sets to be $\{R'_{i,j}\}_{i,j=1,2,\ldots,\alpha}$. The symbols in the sets $R_{i,i}$ for $i=1,2,\ldots,\alpha$ are unchanged. For $i\neq j\in\{ 1,2,\ldots,\alpha\}$, we call the symbols in $R'_{i,j}$ and $R'_{j,i}$ as coupled symbols. 
Specifically, the updated $(2\alpha-\beta)^2$ sets in $A$ are     
\begin{equation}
        R'_{i,j} = \begin{cases}
        \{b'_{i,j} = b_{i,j} + b_{j,i}\} &i < j; \\
        \{b'_{i,j} = b_{i,j}\} &i=j; \\	
        \{b'_{i,j} = b_{i,j} + \theta_{i,j}\cdot b_{j,i}\} & i > j,
            \end{cases}
    \end{equation}
    where $i,j = 1,2,\ldots,2\alpha-\beta$ and $\theta_{i,j} \in \mathbb{F}_q\backslash\{0,1\}$.
The $(2\alpha-\beta)(\beta-\alpha)$ sets in $B_1$ are updated to be  
    \begin{eqnarray*}
        R'_{i,j} &&= \{b'_{i,2j - 2\alpha +\beta-1} = b_{i,2j - 2\alpha +\beta-1} + b_{j,i}, \\&&b'_{i,2j - 2\alpha + \beta} = b_{i,2j - 2\alpha + \beta}\},
    \end{eqnarray*}
    where $i = 1,2,\ldots,2\alpha - \beta$ and $j = 2\alpha-\beta+1,\ldots,\beta$.
The $(2\alpha-\beta)(\beta-\alpha)$ sets in $B_2$ are updated to be  
\[R'_{i,j} =\{b'_{i,j}=b_{i,j}+\theta_{i,j}\cdot(b_{j,2i-2\alpha+\beta-1}+b_{j,2i-2\alpha+\beta})\},\]
    where $i = 2\alpha-\beta+1,\ldots,\alpha, j = 1,2,\ldots,2\alpha-\beta$ and $\theta_{i,j} \in \mathbb{F}_q\backslash\{0,1\}$.
   The $(\beta-\alpha)^2$ sets $R'_{i,j}$ in $C$ are updated to be     

    \begin{align*}
        \begin{cases}
            \{b'_{i,2j - 2\alpha +\beta-1} = \\b_{i,2j - 2\alpha +\beta-1} + \theta_{i,2j - 2\alpha +\beta-1}\cdot  b_{j,2i -2\alpha + \beta - 1},\\
            b'_{i,2j - 2\alpha +\beta} =\\ b_{i,2j - 2\alpha +\beta} + \theta_{i,2j - 2\alpha +\beta}\cdot  b_{j,2i -2\alpha + \beta}\},\  i > j\\
            \{b'_{i,2j-2\alpha+\beta-1} = b_{i,2j-2\alpha+\beta-1}, \\b'_{i,2j - 2\alpha+\beta}=b_{i,2j - 2\alpha+\beta}\},\ i=j \\
            \{b'_{i,2j - 2\alpha +\beta-1} = b_{i,2j - 2\alpha +\beta-1} + b_{j,2i -2\alpha + \beta - 1},\\
            b'_{i,2j - 2\alpha +\beta} = b_{i,2j - 2\alpha +\beta} + b_{j,2i -2\alpha + \beta}\},\ i < j
        \end{cases}
    \end{align*}
    where $i = 2\alpha-\beta+1,\ldots,\alpha$, $j = 2\alpha-\beta+1,\ldots, \beta$ and $\theta_{i,j} \in \mathbb{F}_q\backslash\{0,1\}$.
\end{itemize}

\begin{figure}[H]
    \centering
    \includegraphics[width=0.42\textwidth]{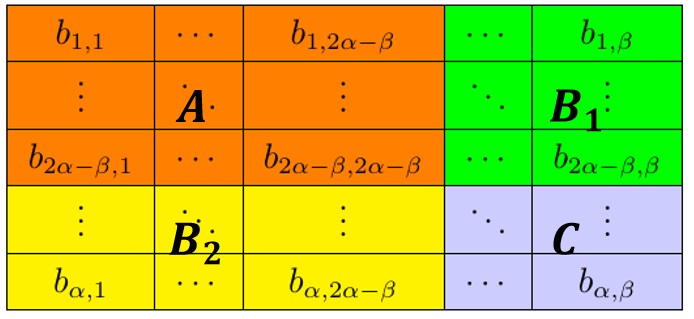}
    \caption{Sub-array allocation for $\alpha \times \beta $ array with $\alpha\leq \beta<2\alpha$.}
    \label{ABC}
\end{figure}
Note that some symbols are not changed in step three and we call them as {\em original symbols}. Since the symbols in sets $R'_{i,i}$ are uncoupled symbols, they are original symbols, where 
$i=1,2,\ldots,\alpha$. According to step three, we can see that some coupled symbols are also original symbols. 
The above three steps are our set transformation. When $\alpha = \beta$, then our set transformation is reduced to the base transformation \cite{8006804}.

\begin{example}
Consider an example of $(\alpha, \beta)=(4, 6)$.
According to step one, the 
$4\times 6$ array is divided into $2\times 2$ sub-array $A$, $2\times 4$ sub-array $B_1$, $2\times 2$ sub-array $B_2$ and $2\times 4$ sub-array $C$. According to step two, we divide the $24$ symbols in the $4\times 6$ array into $16$ sets $\{R_{i,j}\}_{i,j=1,2,3,4}$, where each of the eight sets $\{R_{i,j}\}_{i=1,2,3,4}^{j=1,2}$ has one symbol and each of the eight sets $\{R_{i,j}\}_{i=1,2,3,4}^{j=3,4}$ has two symbols. According to step three, the 16 sets are updated to be $\{R'_{i,j}\}_{i,j=1,2,3,4}$, where the set transformation on the $4\times 6$ array is shown in Fig. \ref{square}.
In Fig. \ref{square}, coupled sets are in the same color and each big box represents a sub-array and each small box represents a set.
\end{example}

\begin{figure}[htpb]
    \centering
    \includegraphics[width=0.495\textwidth]{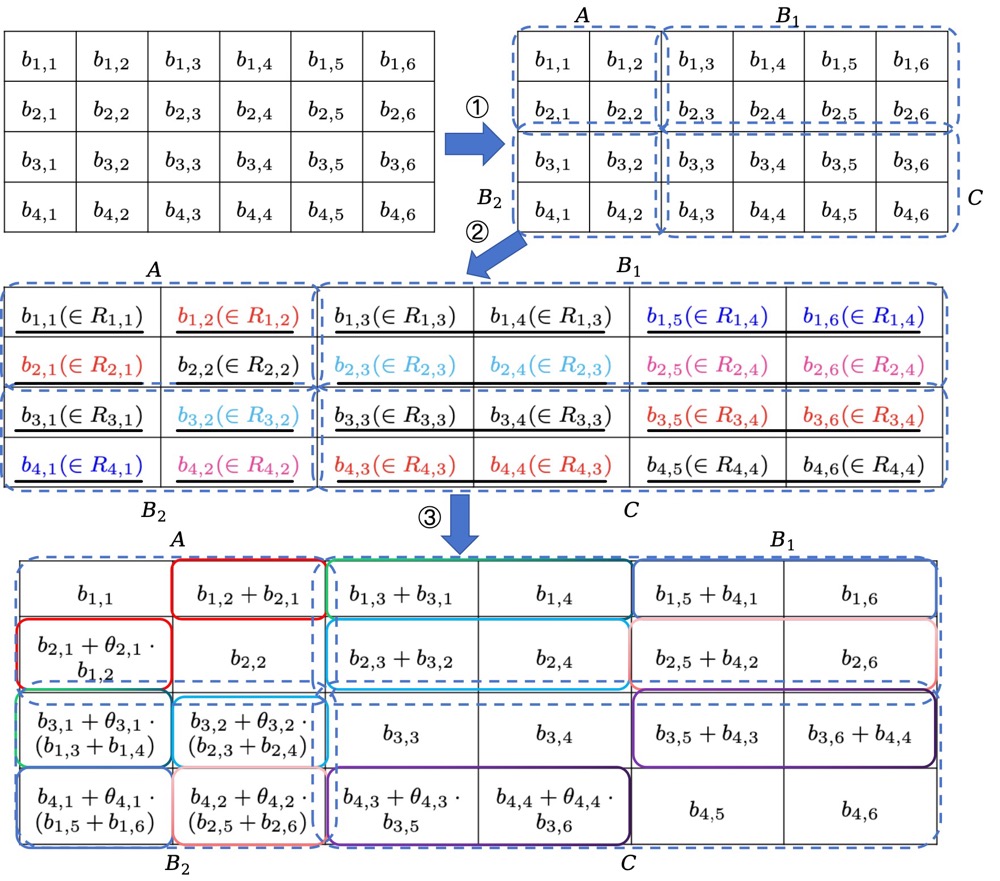}
    \caption{Set Transformation on $4\times 6$ array.}
    \label{square}
\end{figure}

The idea of our set transformation for general parameters $\alpha$ and $\beta$ with $\alpha \leq \beta$ is similar to that with $\alpha \leq \beta <2\alpha$. We summarize the idea as follows.
First, we divide the $\alpha \times \beta$ array into $(\alpha-\beta\bmod \alpha)\times(\alpha-\beta\bmod \alpha)\lfloor\frac{\beta}{\alpha}\rfloor$ sub-array $A$, $(\alpha-\beta\bmod \alpha)\times (\beta\bmod \alpha)\lceil\frac{\beta}{\alpha}\rceil$  sub-array $B_1$, $(\beta\bmod \alpha) \times (\alpha-\beta\bmod \alpha)\lfloor\frac{\beta}{\alpha}\rfloor$ sub-array $B_2$ and $(\beta\bmod \alpha) \times (\beta\bmod \alpha)\lceil\frac{\beta}{\alpha}\rceil$ sub-array $C$. Second, we divide each sub-array into many sets such that each set contains some symbols. Specifically, there are $\lfloor\frac{\beta}{\alpha}\rfloor$ symbols for each set in sub-array A and sub-array $B_2$, $\lceil\frac{\beta}{\alpha}\rceil$ symbols for each set in sub-array $B_1$ and sub-array $C$. Third, we perform the set pairwise combination for all the sets. Notice that if $\beta \geq 2\alpha$, and $\alpha \nmid \beta$, we can split it into $\lceil \frac{\beta}{\alpha} \rceil - 2$ arrays of size $\alpha \times \alpha$ and one $\alpha \times (\alpha + (\beta \bmod \alpha))$ array. If $\beta \geq 2\alpha$ and $\alpha \mid \beta$, we can split it into $\frac{\beta}{\alpha}$ arrays of size $\alpha \times \alpha$. We can show that the repair bandwidth is the same as the codes with $\beta < 2\alpha$.

\section{Set Transformed Codes via Set Transformation}
In this section, we present the construction of our set transformed codes by employing the set transformation for basis MDS codes, propose the efficient repair method for our set transformed codes, and show that our transformed codes are MDS codes.

\subsection{Set Transformed Codes}
Our set transformed codes are an $\alpha \times n$ array, where $r=n-k$ and $2 \leq \alpha \leq r$.

First, we form an $\alpha \times n$ array by creating $\alpha$ instances of $ (n, k)$ RS systematic code. In the array, each row represents an instance. Specifically, for $i = 1, 2, \ldots ,\alpha$, let 
\[(a_{i,1}, a_{i,2}, \ldots,a_{i,k},f_1(\boldsymbol{a}_i),\ldots,f_r(\boldsymbol{a}_i))\] 
be the instance in the $i$-th row, where $\boldsymbol{a}_i := (a_{i,1}, a_{i,2}, \ldots, a_{i,k})$
represents the $k$ data symbols in this instance, and $\{f_s(\boldsymbol{a}_i)\}_{s=1}^{r}$ represent the $r$ parity symbols.

Second, we divide the $\alpha\times k$ array consisting by the first $k$ columns of the $\alpha\times n$ array into $\lfloor \frac{k}{\alpha}\rfloor$ sub-arrays $\{\Phi_i\}_{i = 1,2,\ldots,\lfloor \frac{k}{\alpha}\rfloor}$. If $\alpha|k$, then each sub-array is of size $\alpha\times \alpha$. Otherwise, if $\alpha\nmid k$, then $\Phi_{\lfloor \frac{k}{\alpha}\rfloor}$ is of size $\alpha\times (k-(\lfloor \frac{k}{\alpha}\rfloor-1)\alpha))$ and each of the other sub-array is of size $\alpha\times \alpha$. Similarly, we divide the $\alpha\times r$ array consisting by the last $r$ columns into $\lfloor \frac{r}{\alpha}\rfloor$ sub-arrays. If $\alpha|r$, each sub-array is of size $\alpha\times \alpha$. Otherwise, if $\alpha\nmid r$, the last sub-array is of size $\alpha\times (r-(\lfloor \frac{r}{\alpha}\rfloor-1)\alpha))$ and each of the other sub-array is of size $\alpha\times \alpha$. Therefore, we have divided the $\alpha\times n$ array into $\lfloor \frac{k}{\alpha}\rfloor+\lfloor \frac{r}{\alpha}\rfloor$ sub-arrays, each sub-array is of size $\alpha\times \alpha$ or $\alpha\times (k-(\lfloor \frac{k}{\alpha}\rfloor-1)\alpha))$ or $\alpha\times (r-(\lfloor \frac{r}{\alpha}\rfloor-1)\alpha))$.

Third, we perform the set transformation on each of the obtained $\lfloor \frac{k}{\alpha}\rfloor+\lfloor \frac{r}{\alpha}\rfloor$ sub-arrays to obtain $\lfloor \frac{k}{\alpha}\rfloor+\lfloor \frac{r}{\alpha}\rfloor$ {\em transformed-arrays} which form the $\alpha\times n$ array. We store the $\alpha$ symbols in column $j$ in node $j$, where $j=1,2,\ldots,n$. 

The resulting codes are both called {\em set transformed codes} and we denote the codes as the {\em ST-RS$(n,k,\alpha)$} codes. 
Note that we have divided the $\alpha \times n$ array into $\lfloor \frac{k}{\alpha}\rfloor+\lfloor \frac{r}{\alpha}\rfloor$ sub-arrays in the above construction. Similarly, we can also divide the $\alpha \times n$ array into $\lfloor \frac{n}{\alpha}\rfloor$ sub-arrays to obtain our codes, as like elastic transformed codes.

For notational convenience, denote the symbol in row $i$ and column $j$ of {\em ST-RS$(n,k,\alpha)$} codes as $x_{i,j}$, where $i=1,2,\ldots,\alpha$ and $j=1,2,\ldots,n$.

\subsection{Repair Process for Single-Node Failure}

Suppose that node $t$ fails, where $t \in \{1,2,\ldots, n\}$, we need to repair the $\alpha$ symbols $x_{1,t},x_{2,t},\ldots,x_{\alpha,t}$ in the failed node $t$. Recall that we have divided the $\alpha \times n$ array into $\lfloor \frac{k}{\alpha}\rfloor+\lfloor \frac{r}{\alpha}\rfloor$ sub-arrays in the second step of our construction and have employed the set transformation for each sub-array to obtain the transformed-array. Therefore, we can obtain $\alpha^2$ sets $R'_{i,j}$ for each transformed-array, where $i,j\in\{1,2,\ldots,\alpha\}$. The $\alpha$ symbols in the failed node $t$ are located in one column of one transformed array. We label the indices of the $\lfloor \frac{k}{\alpha}\rfloor+\lfloor \frac{r}{\alpha}\rfloor$ transformed-arrays from 1 to $\lfloor \frac{k}{\alpha}\rfloor+\lfloor \frac{r}{\alpha}\rfloor$.
Without loss of generality, suppose that the $\alpha$ failed symbols are located in transformed-array $\ell$, where $1\leq \ell\leq \lfloor \frac{k}{\alpha}\rfloor+\lfloor \frac{r}{\alpha}\rfloor$.
We present the repair process of node $t$ as follows.
\begin{itemize}
    \item \textbf{Step 1 (Selecting the major row)}:   Note that in transformed-array $\ell$, we have $\alpha^2$ sets $\{R'_{i,j}\}_{i,j=1,2,\ldots,\alpha}$. According to step three of our set transformation in Section II, the set $R'_{s,s}$ contains one or two symbols that are symbols of an $(n,k)$ RS codeword. There must exist one integer $s\in\{1,2,\ldots,\alpha\}$ such that $x_{s,t}\in R'_{s,s}$.
We call row $s$ as the {\em major row} of node $t$.

\item \textbf{Step 2 (Downloading helper symbols)}: We download the following three sets of symbols: (i) set $\mathcal{S}_1$ that contains $k$ symbols in the major row, except the failed symbol and symbols coupled with the failed symbols. Note that we can always find the $k$ symbols in $\mathcal{S}_1$, since $\alpha \leq r$ and $n-\alpha\geq k$; (ii) set $\mathcal{S}_2$ that contains the symbols coupled with the $k$ symbols in set $\mathcal{S}_1$; (iii) set $\mathcal{S}_3$ that contains the symbols coupled with the failed symbols. 

\item \textbf{Step 3 (Repairing the failed symbols in a major row)}: We use the symbols in sets $\mathcal{S}_1$ and $\mathcal{S}_2$ to recover the codeword symbols of the 
$(n,k)$ RS code in the major row, i.e., row $s$.

\item \textbf{Step 4 (Recovering the symbols in the failed node $t$)}: We use the symbols in set $\mathcal{S}_3$, together with the $k$ symbols recovered in Step 3, to repair all symbols in node $t$. 
\end{itemize}

Following the above steps we can complete the single node failure repair of ST-RS $(n,k,\alpha)$, and we can use the same method to repair a single failed node for the case where the $n\times\alpha$ array is directly divided into $\lfloor\frac{n}{\alpha}\rfloor$ transformed arrays.

\begin{example}
Consider the example of $(n, k,\alpha)=(14, 10,3)$, the ST-RS$(14,10,3)$ code is an array of size $3\times 14$. The three symbols in column $j$ are stored in node $j$, where $j=1,2,\ldots,14$.
We divide the $3\times 14$ array into four transformed arrays, each of the first two transformed arrays is of size $3\times 3$ and each of the last two transformed arrays is of size $3\times 4$. 
Fig. \ref{10+4} shows the array of ST-RS$(14,10,3)$ code. 
    
Suppose that node 1 has failed, we need to repair the $3$ symbols $x_{1,1},x_{2,1},x_{3,1}$. Since node 1 is located in the first transformed array, we have $\ell=1$. According to step three of the set transformation in Section II, we have that 
\[
(x_{1,1},x_{2,1},x_{3,1})=(a_{1,1},a_{2,1}+\theta_{2,1} \cdot a_{1,2},a_{3,1}+\theta_{3,1}\cdot a_{1,3}).
\]
Since $R'_{1,1}=\{x_{1,1}\}$ and $x_{1,1}=a_{1,1}$ is a symbol of the $i$-th instance of $(14,10)$ RS code, we have that $s=1$ and row $s=1$ is the major row of node $t=1$. According to step 2 of the repair process, we can download the $k=10$ symbols in set $\mathcal{S}_1$ (the symbols with green color in Fig. \ref{10+4}), download the $5$ symbols in set $\mathcal{S}_2$ (the symbols with purple color in Fig. \ref{10+4}) and download the $2$ symbols in set $\mathcal{S}_3$ (the symbols with blue color in Fig. \ref{10+4}) to repair the erased $3$ symbols. 
According to step 3 of the repair process, we can recover the $k=10$ symbols 
$\{a_{1,j}\}_{j=1,2,\ldots,10}$
from the downloaded symbols in sets $\mathcal{S}_1$ and $\mathcal{S}_2$.
Specifically, we can compute
$a_{1,5}=(\theta_{2,4}-1)^{-1}(x_{2,4}-x_{1,5}), a_{2,4}=(\theta_{2,4}-1)^{-1}(\theta_{2,4}\cdot x_{1,5}-x_{2,4})$
from $2$ symbols $x_{2,4}=a_{2,4}+\theta_{2,4}\cdot a_{1,5}$ and $x_{1,5}=a_{2,4}+a_{1,5}\}$. 

Similarly, we can compute the $10$ symbols 
$\{a_{1,i}\}_{i=4}^{10}\cup\{f_1(\boldsymbol{a}_1), f_2(\boldsymbol{a}_1),f_4(\boldsymbol{a}_1)\}$ from the symbols in sets $\mathcal{S}_1$ and $\mathcal{S}_2$, and further recover the $3$ symbols $a_{1,1},a_{1,2},a_{1,3}$.
Together with the symbols in set $\mathcal{S}_3$ and $a_{1,2},a_{1,3}$, we can recover the failed symbols $x_{2,1}=a_{2,1}+\theta_{2,1}\cdot a_{1,2}=(x_{1,2}-a_{1,2})+\theta_{2,1}\cdot a_{1,2},x_{3,1}=a_{3,1}+\theta_{3,1}\cdot a_{1,3}=(x_{1,3}-a_{1,3})+\theta_{3,1}\cdot a_{1,3}.$


We can count that the repair bandwidth of node 1 is $17$ symbols $(56.7\%)$. Recall that the repair bandwidth of the node of elastic transformed code with the same parameter is $20$ symbols $(66.7\%)$. Our codes have a smaller repair bandwidth than that of elastic transformed code.

    \begin{figure}[H]
    \centering
    \includegraphics[width=0.50\textwidth]{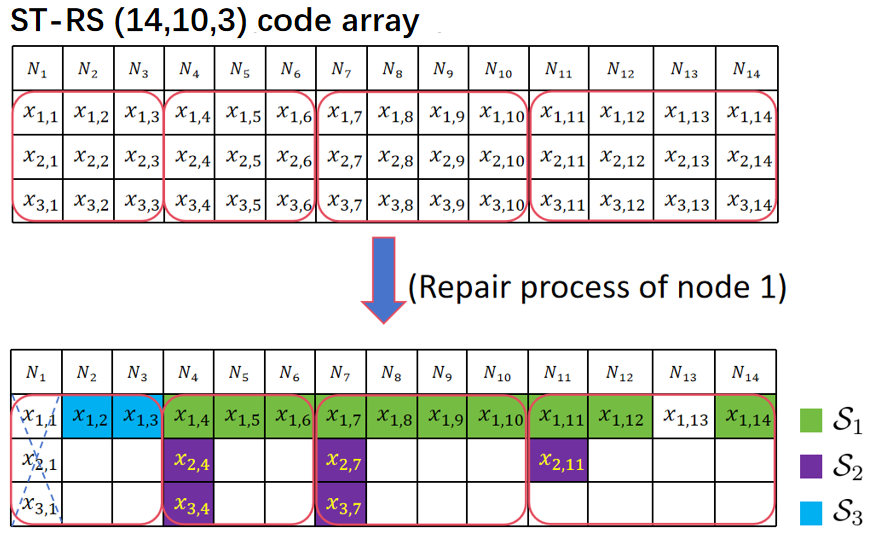}
    \caption{ST-RS$(14, 10,3)$ code with two square sub-arrays and two non-square sub-arrays.}
    \label{10+4}
\end{figure}

\end{example}

\subsection{Analysis of Repair Bandwidth}
In the following, we analyze the lower bound of repair bandwidth of our codes. 
Suppose that we have divided the $\alpha \times n$ array into $\lfloor \frac{n}{\alpha}\rfloor$ sub-arrays in our codes.

\begin{theorem}\label{repairbandwidth} 
The repair bandwidth $\gamma$ of our ST-RS$(n,k,\alpha)$ codes is lower bounded by:
    \begin{equation}\nonumber
        \begin{cases}
            k + \alpha - 1 & k \leq \lfloor\frac{n}{\alpha}\rfloor - 1 + (n\bmod \alpha),\\
            2k -\lfloor\frac{n}{\alpha}\rfloor - (n\bmod \alpha) +\alpha & k > \lfloor\frac{n}{\alpha}\rfloor - 1 + (n\bmod \alpha). \\
        \end{cases}
    \end{equation}
\end{theorem}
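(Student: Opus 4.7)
The plan is to split the repair bandwidth as $\gamma = |\mathcal{S}_1| + |\mathcal{S}_2| + |\mathcal{S}_3|$ following the four-step repair process in Section~III.B, estimate each piece, and assemble them into the piecewise bound. Two of the three pieces are immediate: $|\mathcal{S}_1| = k$ because the major-row $(n,k)$ RS codeword has dimension $k$, and $|\mathcal{S}_3| = \alpha - 1$ because each of the $\alpha - 1$ failed symbols $x_{i,t}$ with $i \neq s$ lies in a coupled set $R'_{i,s}$ whose pair $R'_{s,i}$ contributes exactly one surviving partner, sitting in a distinct surviving node of the same sub-array as $t$. The substantive work lies in bounding $|\mathcal{S}_2|$, which equals the number of non-original (coupled) symbols chosen for $\mathcal{S}_1$; the repair minimises it by scooping up as many original (diagonal) symbols of row $s$ from surviving nodes as possible. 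Writing $M$ for this maximum, the key inequality I want to prove is
\[
M \;\leq\; \lfloor n/\alpha \rfloor - 1 + (n \bmod \alpha),
\]
after which $|\mathcal{S}_2| \geq \max\{0,\,k - M\}$ instantly yields the theorem.

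To establish this inequality I would count sub-array by sub-array. Each of the $\lfloor n/\alpha \rfloor - 1$ square $\alpha \times \alpha$ sub-arrays contributes at most one original to row $s$, namely its diagonal entry $R'_{s,s}$; and if $t$ itself lies in such a square sub-array, this single original equals $x_{s,t}$ and is therefore unavailable. For the remaining non-square sub-array of size $\alpha \times \beta$ with $\beta = \alpha + (n \bmod \alpha)$, a careful read of Step~3 of the set transformation is required. In $B_1$, the update $R'_{i,j} = \{\,b_{i,2j-2\alpha+\beta-1} + b_{j,i},\; b_{i,2j-2\alpha+\beta}\,\}$ leaves the \emph{second} symbol untouched, so for a ``top'' major row ($s \leq 2\alpha - \beta$) the non-square sub-array carries $1 + (n \bmod \alpha)$ originals in row $s$ (one from $R'_{s,s} \subset A$, plus one per set in $B_1$), while for a ``bottom'' major row ($s > 2\alpha - \beta$) it carries only the two originals inside $R'_{s,s} \subset C$.

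A short case analysis on whether $t$ lies in a square sub-array or in the non-square one, and on whether the induced major row is top or bottom, then shows that $M$ attains its maximum $\lfloor n/\alpha \rfloor - 1 + (n \bmod \alpha)$ precisely in the top configurations, while every other configuration gives strictly fewer available originals (and hence an equally strong or stronger lower bound on $|\mathcal{S}_2|$). Adding up $|\mathcal{S}_1| + |\mathcal{S}_2| + |\mathcal{S}_3|$ immediately rearranges into the two-case formula of the theorem, with the split point $k = \lfloor n/\alpha \rfloor - 1 + (n \bmod \alpha)$ arising from the $\max\{0,\cdot\}$.

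The main obstacle I expect is the bookkeeping inside the non-square sub-array: it is easy to overlook the extra originals contributed by $B_1$ (one per set, from the untouched second symbol), and the $(n \bmod \alpha)$ term in the bound is exactly this contribution. A secondary care point is confirming, in each of the four configurations above, that exactly one original in row $s$ is forfeited to the failed node $t$, neither zero nor two; this rests on the observation that $s$ is defined by $x_{s,t}\in R'_{s,s}$, so $x_{s,t}$ is always one of the originals of $R'_{s,s}$ while the other symbols in $R'_{s,s}$ (when it has two) live in a different column and hence a different surviving node.
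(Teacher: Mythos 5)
Your proposal follows essentially the same route as the paper's proof: decompose $\gamma = |\mathcal{S}_1| + |\mathcal{S}_2| + |\mathcal{S}_3|$, note $|\mathcal{S}_1| = k$ and $|\mathcal{S}_3| \geq \alpha - 1$, and lower-bound $|\mathcal{S}_2|$ by capping the number of original symbols available in the major row at $\lfloor n/\alpha \rfloor - 1 + (n \bmod \alpha)$ --- your sub-array-by-sub-array count (square diagonals, the untouched second symbols in $B_1$, the diagonal pair in $C$, and the one forfeited original) is just a more explicit rendering of the paper's claims (i)--(ii). One small overstatement: the equality $|\mathcal{S}_3| = \alpha - 1$ can fail for a node in the last $2(n \bmod \alpha)$ columns, where recovering a symbol from a $B_1$ set may require two downloads, but since the theorem only asserts a lower bound, the inequality $|\mathcal{S}_3| \geq \alpha - 1$ (which your coupled-pairs argument does establish) is all that is needed.
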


\begin{proof}
According to the repair process in Section III.B, the repair bandwidth is the total number of symbols in sets $\mathcal{S}_1$,  $\mathcal{S}_2$, and  $\mathcal{S}_3$.
It is easy to see that there are $k$ symbols in set $\mathcal{S}_1$. 

Next, we consider the number of symbols in set $\mathcal{S}_2$. We first claim that (i)
there exists one non-coupled symbol which is in $R'_{i,i}$ and $\alpha - 1$ coupled symbols in row $i$ of an $\alpha \times \alpha$ square transformed-array; (ii) there exists at most two non-coupled symbols which are in $R'_{i,i}$, at most $\beta-\alpha-1$ original symbols and at least coupled non-original symbols in row $i$ of an $\alpha \times \beta$ non-square transformed-array, where $i=1,2,\ldots,\alpha$. 
More original symbols mean lower repair bandwidth when selecting symbols in set $\mathcal{S}_2$.
There are at most $\lfloor\frac{n}{\alpha}\rfloor - 1 + (n\bmod\alpha)$ original symbols in each row. 
If $k \leq \lfloor\frac{n}{\alpha}\rfloor - 1 + (n\bmod\alpha)$, then we can select all the $k$ symbols in set $\mathcal{S}_1$ to be original symbols and $|\mathcal{S}_2|=0$. Otherwise, if $k >\lfloor\frac{n}{\alpha}\rfloor - 1 + (n\bmod\alpha)$, we can select $\lfloor\frac{n}{\alpha}\rfloor - 1$ non-coupled symbols, at most $n\bmod \alpha$ original symbols and at least $k-(\lfloor\frac{n}{\alpha}\rfloor - 1 + (n\bmod\alpha))$ or at most $k-(\lfloor\frac{n}{\alpha}\rfloor - 1)$
coupled symbols in set $\mathcal{S}_1$.
Therefore, the number of symbols in set $\mathcal{S}_2$ is at least $k-(\lfloor\frac{n}{\alpha}\rfloor - 1 + (n\bmod\alpha))$.

There are at least $\alpha - 1$ symbols coupled with the $\alpha$ failed symbols. Therefore, the number of symbols in set $\mathcal{S}_3$ is at least $\alpha-1$. The theorem is proved.
\end{proof}


\subsection{The MDS Property}
The next theorem shows the filed size of our ST-RS$(n,k,\alpha)$ codes to maintain the MDS property.

\begin{theorem}
    \label{MDS}
If the field size is larger than
    \begin{equation}
        \binom{n - 1}{k - 1} - \binom{\lceil \frac{n}{\alpha} \rceil - 1}{\lceil \frac{k}{\alpha} \rceil - 1} ,
    \end{equation}
then we can always find the values $\theta_{i,j}$ such that our codes are MDS codes.
\end{theorem}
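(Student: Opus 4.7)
The plan is to cast the MDS condition as a collection of non-vanishing polynomial conditions in the variables $\theta = \{\theta_{i,j}\}$ and then bound the number of conditions that genuinely depend on $\theta$. Concretely, the ST-RS$(n,k,\alpha)$ code is MDS iff for every $K\subseteq\{1,\ldots,n\}$ with $|K|=k$, the corresponding $k\alpha\times k\alpha$ decoding submatrix has nonzero determinant $D_K(\theta)$. Because each coupling introduced in step three of the set transformation is affine in a single $\theta_{i,j}$, each $D_K$ is a polynomial of degree at most one in every variable.

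I would then classify the subsets $K$ into two types. Call $K$ \emph{trivial} if the $n-k$ erased columns land in the transformed sub-arrays in such a way that decoding uncouples row by row, so $D_K$ collapses to a nonzero constant whose non-vanishing is inherited directly from the MDS property of the $\alpha$ base $(n,k)$ RS instances. Enumerating these patterns --- essentially distributing the erasures so that within each transformed sub-array the erased columns line up in non-overlapping rows and no coupled pair is simultaneously lost --- yields exactly $\binom{\lceil n/\alpha\rceil - 1}{\lceil k/\alpha\rceil - 1}$ trivial configurations, after the usual reduction of $\binom{n}{k}$ to $\binom{n-1}{k-1}$ by fixing one reference node. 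Hence there are at most $\binom{n-1}{k-1} - \binom{\lceil n/\alpha\rceil - 1}{\lceil k/\alpha\rceil - 1}$ non-trivial patterns.

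For each non-trivial $K$, I would exhibit a single concrete evaluation of $\theta$ under which $D_K$ is nonzero (for instance, by ``turning on'' only one coupling at a time to reduce to an already-known MDS configuration of a square transformed sub-array, following the base-transformation analysis in \cite{8006804}). This establishes that each non-trivial $D_K$ is a nonzero polynomial. The proof then concludes with a Schwartz--Zippel / union-bound argument applied to the product $\prod_K D_K(\theta)$ ranging over the non-trivial patterns: since this product is nonzero and multilinear in each $\theta_{i,j}$, one can find values $\theta_{i,j}\in\mathbb{F}_q\setminus\{0,1\}$ simultaneously avoiding all its roots whenever $|\mathbb{F}_q|$ strictly exceeds the stated bound.

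The main obstacle is the classification and counting in the middle step: correctly identifying which erasure patterns are trivially MDS and verifying that their count equals $\binom{\lceil n/\alpha\rceil - 1}{\lceil k/\alpha\rceil - 1}$. The hybrid structure mixing square $\alpha\times\alpha$ sub-arrays with a non-square sub-array makes the bookkeeping delicate, since erasures concentrated in the non-square block couple more variables and admit fewer automatic reductions than erasures in the square blocks. A careful case analysis on the distribution of erasures across the $\lfloor k/\alpha\rfloor+\lfloor r/\alpha\rfloor$ sub-arrays --- possibly by induction on the number of sub-arrays involved --- will be needed to confirm both that the enumeration matches the binomial count exactly and that no non-trivial $D_K$ degenerates to the zero polynomial.
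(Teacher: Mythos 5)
Your overall strategy coincides with the paper's: regard each of the $\binom{n}{k}$ decoding determinants $D_K$ as a polynomial of degree at most one in every $\theta_{i,j}$, discard a family of subsets $K$ that are decodable independently of $\theta$, and finish by applying the Combinatorial Nullstellensatz \cite[Theorem 1.2]{alon_1999} to the product of the remaining determinants, whose degree in each single variable is at most $\binom{n-1}{k-1}-\binom{\lceil \frac{n}{\alpha}\rceil-1}{\lceil \frac{k}{\alpha}\rceil-1}$ (the per-variable degree bound is the reason the Nullstellensatz, rather than a total-degree Schwartz--Zippel estimate, is the right closing tool, and your ``multilinear, avoid roots variable by variable'' phrasing is consistent with this).

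The genuine gap sits in your middle step, precisely where you yourself flag the main obstacle. The discardable cases are not ``erasure patterns that uncouple row by row'': in the paper they are the $k$-subsets of columns that form unions of \emph{entire} transformed arrays, which are decodable for every admissible choice of $\theta$ by the base-transformation result \cite[Theorem 1]{8410002}. With that criterion the count is immediate, with no delicate bookkeeping: among the selections containing a fixed reference node, the transformed array holding that node is forced, and one chooses $\lceil \frac{k}{\alpha}\rceil-1$ of the remaining $\lceil \frac{n}{\alpha}\rceil-1$ arrays, giving exactly $\binom{\lceil \frac{n}{\alpha}\rceil-1}{\lceil \frac{k}{\alpha}\rceil-1}$. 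Your proposed class (erased columns in non-overlapping rows, no coupled pair simultaneously lost) is a different and generally larger family, and there is no reason its cardinality should match the binomial coefficient; the induction on sub-arrays you sketch would not recover the stated bound. Note also what the subtraction must actually buy: the excluded $D_K$ must be nonzero for \emph{all} admissible $\theta$, so that they can be dropped from the product --- it is not enough that they be constants after some reduction --- and the whole-array criterion delivers this directly via \cite[Theorem 1]{8410002}. Your remaining plan, certifying that each non-excluded $D_K$ is a nonzero polynomial by exhibiting one evaluation per $K$, is the right idea and is indeed required for the Nullstellensatz to apply, but it is left unexecuted in your write-up (the paper, it should be said, also passes over this verification silently).
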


\begin{proof}
It is sufficient to show that any $k$ out of the $n$ nodes can recover all the data symbols.
Suppose that $r$ nodes are erased, there are $\binom{n}{k}$ cases in selecting any $k$ out of the $n$ nodes. For each case, the determinant of the corresponding coefficient matrix is a polynomial of 
variables $\theta_{i,j}$, where $i = 1,2,\ldots,\alpha$ and $j = 1,2,\ldots,n$. The variable $\theta_{i,j}$ is used in at most one entry in one square set transformation array, and at most two entries in a non-square set transformation array. Thus the degree of  $\theta_{i,j}$ of the determinant polynomial is at most one. We want to find the values for variables $\theta_{i,j}$ such that the evaluation of the determinants multiplication of all $\binom{n}{k}$ cases is non-zero. The degree of variable $\theta_{i,j}$ of all $\binom{n}{k}$ determinants multiplication is at least $\binom{n-1}{k-1}$. By \cite[Theorem 1.2]{alon_1999}, if the field size $q$ is larger than the degree of each $\theta_{i,j}$ in the determinants multiplication, there exists at least one value for each variable $\theta_{i,j}$ such that the evaluation of the determinants multiplication is non-zero. Therefore, our codes are MDS codes if the field size is $\binom{n-1}{k-1}$.

Note that we can always retrieve all the data symbols from some cases of $k$ nodes. If the $k$ nodes are chosen as all the columns in some transformed arrays, then we can compute all the data symbols \cite[Theorem 1]{8410002}.
There are $\binom{\lceil \frac{n}{\alpha} \rceil - 1}{\lceil \frac{k}{\alpha} \rceil - 1}$
such special cases. We only need to make sure that the evaluation of the multiplication of $\binom{n-1}{k-1}-\binom{\lceil \frac{n}{\alpha} \rceil - 1}{\lceil \frac{k}{\alpha} \rceil - 1}$ determinants is non-zero, and the theorem is proved.
\end{proof}


\section{Comparison}

In this section, we evaluate the performance of our ST-RS$(n,k,\alpha)$ codes and the existing MDS codes in terms of field size, sub-packetization level and repair bandwidth.




Table \ref{tab:codes_and_fields} shows the field size and sub-packetization level of our codes and the existing related MDS codes, including HTEC \cite{8025778}, $\mathcal{C}_1$ \cite{2023wk1}, $\mathcal{C}_2$ \cite{7084873}, $\mathcal{C}_3$ \cite{hou2019multilayer}, and ET-RS codes \cite{10228984} that have low repair bandwidth and small sub-packetization level. We can see from Table \ref{tab:codes_and_fields} that our codes have lower field size than all the other evaluated codes, and have a flexible sub-packetization level like ET-RS codes.
Note that we can recursively apply our set transformation for RS codes many times to achieve any sub-packetization level like ET-RS codes \cite{10228984}.

Define the average repair bandwidth ratio as the ratio of the average repair bandwidth of all $n$ nodes to the number of data symbols.
In the following, we first show that the repair bandwidth of our ST-RS$(n,k,\alpha)$ codes is strictly lower than that of ET-RS$(n,k,\alpha)$ codes \cite{10228984} (refer to Theorem \ref{repairbandwidth_theorem}).
Then we evaluate the average repair bandwidth ratio for some typical parameters of our ST-RS$(n,k,\alpha)$ codes and ET-RS codes (refer to Table \ref{tab:my_label}, these data are exactly the repair bandwidth of the repair methods).

Next lemma shows that our codes have strictly less repair bandwidth than ET-RS codes for some nodes.

\begin{table}[H]
    \centering
    \resizebox{0.88\linewidth}{1.9cm}{
    \begin{tabular}{|c|c|c|}
    \hline
    Codes & Field Size & Sub-packetization \\
    \hline
    $\mathcal{C}_2$\cite{7084873} &  $\binom{n}{k}(n-k)^{\alpha + 1}$ & $r^{\frac{k}{r}}$\\
    \hline
    $\mathcal{C}_3$\cite{hou2019multilayer} & $\binom{n}{k}\frac{\alpha(\alpha-1)}{2}\lfloor\frac{k}{\alpha}\rfloor$ & $(d-k+1)^{\lfloor\frac{n}{(d-k+1)\eta}\rfloor}$ \\
    \hline
    HTEC\cite{8025778} & $\binom{n}{k}(n-k)\alpha$ & $2 \leq \alpha\leq r^{\lceil\frac{k}{r}\rceil}$\\
    \hline
    $\mathcal{C}_1$ \cite{2023wk1} & $\binom{n-1}{k-1}+2$& $2\leq \alpha\leq r$\\
    \hline
    ET-RS \cite{10228984} & $2\Big(\binom{n - 1}{k - 1} - \binom{\lceil \frac{n}{\alpha*} \rceil - 1}{\lceil \frac{k}{\alpha*} \rceil - 1}\Big)$& $2\leq \alpha\leq r^{\lfloor\frac{n}{r}\rfloor}$\\
    \hline
    \textbf{Our codes} & $\binom{n - 1}{k - 1} - \binom{\lceil \frac{n}{\alpha*} \rceil - 1}{\lceil \frac{k}{\alpha*} \rceil - 1}$ &$2 \leq \alpha\leq r^{\lfloor\frac{n}{r}\rfloor}$\\
    \hline
    \end{tabular}}
    \caption{Evaluation for our codes and related codes.}
    \label{tab:codes_and_fields}
\end{table}


\begin{lemma}\label{bandwidthproof1}
Let $\gamma_{\text{ST-RS}}$ be the repair bandwidth of single-node failure in the first $n-2(n\bmod\alpha)$ nodes of our codes, and $\gamma_{\text{ET-RS}}$ be the lower bound on the repair bandwidth of single-node failure in the first $n-2(n\bmod\alpha)$ nodes of ET-RS$(n,k,\alpha)$. We have that $ \gamma_{ET-RS} - \gamma_{ST-RS} \geq 0.$ Moreover, there exist $(\alpha - (n\bmod \alpha))\lfloor\frac{n}{\alpha}\rfloor$ nodes such that $\gamma_{ET-RS}-\gamma_{ST-RS} \geq n\bmod\alpha.$
\end{lemma}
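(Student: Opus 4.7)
The plan is a column-by-column comparison of the three helper-symbol sets $\mathcal{S}_1,\mathcal{S}_2,\mathcal{S}_3$ produced by the Section III.B repair procedure for both codes. Since both constructions build on $\alpha$ parallel instances of $(n,k)$ RS, we have $|\mathcal{S}_1|=k$ and $|\mathcal{S}_3|=\alpha-1$ in each scheme, so the whole gap $\gamma_{\text{ET-RS}}-\gamma_{\text{ST-RS}}$ must live in $|\mathcal{S}_2|$, which is controlled by how many \emph{original} (uncoupled) symbols are available in the chosen major row.

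First I would fix the $\lfloor n/\alpha\rfloor$-sub-array partition used by both codes: $\lfloor n/\alpha\rfloor-1$ square $\alpha\times \alpha$ blocks and one $\alpha\times(\alpha+n\bmod\alpha)$ non-square block. The first $n-2(n\bmod\alpha)$ nodes are then precisely the columns outside the last $2(n\bmod\alpha)=2\beta-2\alpha$ columns of the non-square block, i.e., the $A$- and $B_2$-columns of the non-square block together with every column of every square block. For each such failed node, Theorem \ref{repairbandwidth} combined with the set-transformation definition in Section II shows that ST-RS admits $\lfloor n/\alpha\rfloor-1+n\bmod\alpha$ original entries in the major row (one ``diagonal'' original per square sub-array, plus $n\bmod\alpha$ originals supplied by the second-half entries of $B_1$ in the non-square block), yielding $|\mathcal{S}_2^{\text{ST-RS}}|=\max\{0,\,k-(\lfloor n/\alpha\rfloor-1+n\bmod\alpha)\}$.

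For ET-RS I would replay the two-stage base transformation of \cite{10228984} on the $\alpha\times\beta$ non-square block and check that, after the second stage squares the block, the major row retains only one original in the non-square block (the single $R'_{s,s}$-entry), with the $n\bmod\alpha$ extra originals available to ST-RS now turned into coupled symbols. This gives $|\mathcal{S}_2^{\text{ET-RS}}|\ge\max\{0,\,k-(\lfloor n/\alpha\rfloor-1)\}$, and subtracting the two bounds term by term yields $\gamma_{\text{ET-RS}}-\gamma_{\text{ST-RS}}\ge 0$. For the strengthened claim I would isolate, inside every one of the $\lfloor n/\alpha\rfloor$ sub-arrays, the $\alpha-n\bmod\alpha$ columns whose major row can be selected among the ``top'' rows $1,\dots,2\alpha-\beta$: for a failure in any of these $(\alpha-n\bmod\alpha)\lfloor n/\alpha\rfloor$ nodes, ST-RS's major row sees $n\bmod\alpha$ more originals than ET-RS's, forcing a per-node saving of at least $n\bmod\alpha$.

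The main obstacle will be the explicit original-symbol bookkeeping in the non-square block after ET-RS's second base transformation, since \cite{10228984} describes the construction but does not tabulate per-row originals; I would need to verify directly that the second base transformation couples precisely the $n\bmod\alpha$ symbols that ST-RS keeps original, and that no compensating originals are created on any other row, so the inequality $|\mathcal{S}_2^{\text{ET-RS}}|-|\mathcal{S}_2^{\text{ST-RS}}|\ge n\bmod\alpha$ holds on the nose for the designated $(\alpha-n\bmod\alpha)\lfloor n/\alpha\rfloor$ nodes.
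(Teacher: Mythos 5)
Your proposal follows essentially the same route as the paper's own proof: both fix the $\lfloor n/\alpha\rfloor$-block partition, note $|\mathcal{S}_1|=k$ and that the $\mathcal{S}_3$ contributions cancel (ST needs exactly $\alpha-1$ there while ET needs at least $\alpha-1$), concentrate the entire gap in $|\mathcal{S}_2|$ via the count of $\lfloor n/\alpha\rfloor-1+(n\bmod\alpha)$ major-row originals for ST versus $\lfloor n/\alpha\rfloor-1$ for ET, and identify the same $(\alpha-(n\bmod\alpha))\lfloor n/\alpha\rfloor$ nodes whose major row lies in the top $2\alpha-\beta$ rows and thus sees the $n\bmod\alpha$ extra $B_1$ originals. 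The bookkeeping obstacle you flag for ET-RS's non-square block is handled in the paper at the same level of detail, by asserting the per-row original counts from \cite[Theorem 2]{10228984} rather than re-deriving them.
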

\begin{proof}
    See Appendix \ref{app1}.
\end{proof}

The next theorem shows that our codes have a lower average repair bandwidth ratio than ET-RS codes for high-code-rate parameters.

\begin{theorem}\label{repairbandwidth_theorem}
Suppose that $n / k > 0.5$, the average repair bandwidth ratio $\bar{\gamma}_{ST-RS}$ of our ST-RS$(n, k, \alpha)$   is lower than that of ET-RS$(n, k, \alpha)$.
\end{theorem}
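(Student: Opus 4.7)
The plan is to reduce the statement to a total-bandwidth comparison: since ST-RS$(n,k,\alpha)$ and ET-RS$(n,k,\alpha)$ share parameters, the denominator $nk\alpha$ in the average repair bandwidth ratio is common, so the theorem reduces to showing $\sum_{t=1}^{n}\gamma_{ST-RS}(t)<\sum_{t=1}^{n}\gamma_{ET-RS}(t)$, where $\gamma(t)$ denotes the single-node repair bandwidth at node $t$.

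I would partition the $n$ nodes into a group $G_1$ consisting of the first $n-2(n\bmod\alpha)$ nodes, which lie inside the $\alpha\times\alpha$ square transformed sub-arrays common to both constructions, and a group $G_2$ consisting of the remaining $2(n\bmod\alpha)$ nodes, which lie inside the non-square sub-arrays that are characteristic of the set transformation (while ET-RS covers the same columns with a pair of square sub-arrays). On $G_1$, Lemma \ref{bandwidthproof1} directly gives $\gamma_{ET-RS}(t)\geq\gamma_{ST-RS}(t)$ pointwise, together with a strict gap of at least $n\bmod\alpha$ on $(\alpha-(n\bmod\alpha))\lfloor n/\alpha\rfloor$ distinguished nodes, yielding a strict total saving of at least $(n\bmod\alpha)(\alpha-(n\bmod\alpha))\lfloor n/\alpha\rfloor$ symbols whenever $n\bmod\alpha\neq 0$.

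For $G_2$, I would carry out a parallel counting argument using the repair procedure of Section III.B and the original-symbol counting in the proof of Theorem \ref{repairbandwidth}: a non-square $\alpha\times\beta$ transformed sub-array provides up to $\beta-\alpha-1$ extra original symbols per row beyond the non-coupled symbol in $R'_{i,i}$, whereas the two square sub-arrays used by ET-RS to cover the same columns supply only the baseline one non-coupled symbol per row per sub-array. Plugging these counts into the sizes of $\mathcal{S}_1$, $\mathcal{S}_2$, $\mathcal{S}_3$ yields $\gamma_{ST-RS}(t)\leq\gamma_{ET-RS}(t)$ for every $t\in G_2$, so this group contributes nonnegatively to the total saving. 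Adding the two group contributions gives the required strict total inequality, and dividing by $nk\alpha$ completes the proof. The hypothesis on $n/k$ enters by placing every node into the nontrivial branch of Theorem \ref{repairbandwidth} (i.e.\ $k>\lfloor n/\alpha\rfloor-1+(n\bmod\alpha)$), so that the Lemma-\ref{bandwidthproof1} savings are not washed out by the trivial small-$k$ regime where $|\mathcal{S}_2|=0$ for both constructions.

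The main obstacle I expect is the per-node comparison on $G_2$, because Lemma \ref{bandwidthproof1} deliberately excludes these nodes: one has to rerun its counting argument in the non-square setting and line it up against the pair of square ET-RS sub-arrays that cover the same columns, carefully tracking which helper symbols in a row can be chosen as original rather than coupled. Once that per-node inequality on $G_2$ is in place, the remaining step is a routine summation together with the observation that the $G_1$ contribution from Lemma \ref{bandwidthproof1} is already strictly positive.
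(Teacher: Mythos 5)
Your reduction to a total-bandwidth comparison and your split into $G_1$ (the first $n-2(n\bmod\alpha)$ nodes) and $G_2$ (the last $2(n\bmod\alpha)$ nodes) coincide with the paper's proof, and your treatment of $G_1$ via Lemma \ref{bandwidthproof1} is exactly its first step, giving the total saving $(n\bmod\alpha)(\alpha-(n\bmod\alpha))\lfloor\frac{n}{\alpha}\rfloor$. The genuine gap is your claim of pointwise domination $\gamma_{ST-RS}(t)\leq\gamma_{ET-RS}(t)$ for every $t\in G_2$. That claim is not what the paper proves, and it would fail: your counting compares only $\mathcal{S}_1$ and $\mathcal{S}_2$ (where the extra original symbols of the non-square sub-array do favor ST-RS), but it ignores $\mathcal{S}_3$. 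In the non-square $\alpha\times(\alpha+(n\bmod\alpha))$ transformed sub-array, the coupled set of a symbol in $B_2$ is a \emph{two-symbol} set in $B_1$, so recovering one failed symbol there can require downloading two coupled symbols rather than one, whereas ET-RS is only charged one downloaded symbol per lost symbol in the paper's lower bound. Hence on $G_2$ the sign of the comparison can be reversed: ST-RS may cost strictly more per node. The paper handles this by bounding the total $G_2$ excess of ST-RS over ET-RS by $2(n\bmod\alpha)(\alpha-(n\bmod\alpha))$ (the number of symbols in $B_1$) and then verifying that the $G_1$ saving dominates, i.e.\ $(n\bmod\alpha)(\alpha-(n\bmod\alpha))\lfloor\frac{n}{\alpha}\rfloor\geq 2(n\bmod\alpha)(\alpha-(n\bmod\alpha))$, which reduces to $\lfloor\frac{n}{\alpha}\rfloor\geq 2$.

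This also shows that you have misidentified where the rate hypothesis enters. It is not used to force every node into the branch $k>\lfloor\frac{n}{\alpha}\rfloor-1+(n\bmod\alpha)$ of Theorem \ref{repairbandwidth}; it is used to conclude $2k>n>2r\geq 2\alpha$, hence $\lfloor\frac{n}{\alpha}\rfloor\geq 2$, which is precisely the balancing inequality that lets the $G_1$ saving absorb the $G_2$ excess. Without that step your argument does not close, because your $G_2$ inequality is unproved (and false in general). To repair the proposal you would need to replace the pointwise claim on $G_2$ by the aggregate bound on the $\mathcal{S}_3$ penalty and then invoke the rate condition as above; note also that the hypothesis written as ``$n/k>0.5$'' in the statement should be read as $k/n>1/2$, consistent with the paper's use of $2k>n$.
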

\begin{proof}
    See Appendix \ref{app2}.
\end{proof}

Table \ref{tab:my_label} shows the average repair bandwidth ratio of our ST-RS$(n,k,\alpha)$ codes, ET-RS codes and HTEC codes, under some typical parameters. Note that the values are exactly the repair bandwidth in the table.
It can be seen from Table \ref{tab:my_label} that ST-RS codes have lower repair bandwidth than the other two codes for all the evaluated parameters.
\begin{table}[H]
    \centering
    \renewcommand{\arraystretch}{1.5}
    \begin{tabular}{|c|c|c|c|c|}
\hline
 $(n,k,\alpha)$        & \textbf{ET-RS} &   \textbf{HTEC} & \textbf{ST-RS} & \textbf{Cut-set bound}\\
\hline
    $(10, 7, 3)$    & $72.3\%$ & $68.5\%$ & \textbf{65.7\%} & $42.8\%$ \\
\hline
     $(14, 10, 4)$     & $55.3\%$  & $ 60.1\%$ & \textbf{51.7\%} & $32.5\%$\\
\hline
    $(17, 13, 4)$     &  $54.2\%$  & $57.2\%$ & \textbf{49.7\%} &$30.7\%$ \\
\hline
     $(22, 18, 4)$    & $50.1\%$ & $54.1\%$ & \textbf{48.1\%} & $29.1\%$ \\
\hline
    $(29, 25, 4)$     & $49.0\%$ &  $51.5\%$ & \textbf{46.8\%} &$28.0\%$ \\
\hline
    \end{tabular}
    \caption{Average repair bandwidth ratio of three codes.}
    \label{tab:my_label}
\end{table}

\section{Conclusion}
In this paper, we propose set transformation that can transform base MDS code into
set transformed MDS code with lower repair bandwidth.
Compared with existing elastic transformed codes, our set transformed codes require a smaller field size and lower repair bandwidth. How to design a more general transformation structure to further reduce the repair bandwidth is one of our future works.

\ifCLASSOPTIONcaptionsoff
\newpage
\fi

\bibliography{ck}
\newpage

\begin{appendices}
      \section{Proof of Lemma \ref{bandwidthproof1}}\label{app1}
\begin{proof}
    Referring to Theorem \ref{repairbandwidth} and \cite[Theorem 2]{10228984}, the repair process of both set transformation and elastic transformation is divided into three sets of downloading, we note that the three sets of set transformation are sets $\mathcal{S}^{T}_1,\mathcal{S}^{T}_2,\mathcal{S}^T_3$, and the 
    three sets of elastic transformation are sets $\mathcal{S}^{E}_1, \mathcal{S}^{E}_2,\mathcal{S}^E_3$.

    Recall that according to the similar steps of Section III.B, the first $\alpha(\lfloor\frac{n}{\alpha}\rfloor - 1)$ nodes form $\lfloor\frac{n}{\alpha}\rfloor - 1$ $\alpha \times \alpha$ arrays, and the remaining $\alpha + n\bmod\alpha$ nodes form an $\alpha \times (\alpha + n\bmod\alpha)$ array to make transformation respectively.

    Suppose node $t$ failed and $t \in \{1,2,\ldots, n - 2(n\bmod\alpha)\}$, to recover node $t$, we discuss the three sets in the repair process in turn. 
    
    The first sets $\mathcal{S}_1^T$ and $\mathcal{S}_2^E$ are discussed first, it is easy to see that $|\mathcal{S}_1^T| = |\mathcal{S}_1^E| = k$;

    Then for the last sets, if node $t$ is not in the last $2(n\bmod\alpha)$ nodes, the symbols in node $t$ must be in the sub-array $A$ and $B_2$. Every symbol in the sub-array $A$ and $B_2$ can only download at most one coupled symbol to recover, so $|\mathcal{S}_1^T| = \alpha - 1$. However, $|\mathcal{S}_1^E| \geq \alpha - 1$.

    For the second sets, recall that it includes the coupled symbols for the $k$ symbols in the first set. In set $\mathcal{S}_1^{T}$, there are $\lfloor\frac{n}{\alpha}\rfloor - 1 + (n\bmod\alpha)$ symbols that do not need to download coupled symbols, at most $n\bmod\alpha$ symbols that need to download two coupled symbols and the rest of symbols that need to download one coupled symbols. In set $\mathcal{S}_1^{E}$, there are $\lfloor\frac{n}{\alpha}\rfloor - 1$ symbols that do not need to download coupled symbols, up to $n\bmod\alpha$ symbols that need to download two coupled symbols, and the rest of symbols that need to download one coupled symbols. We select $k$ symbols with as few coupled symbols as possible. So there must be $|\mathcal{S}^{T}_2| \leq |\mathcal{S}^{E}_2|$. Finally, 
    \begin{eqnarray*}
        \gamma_{ET-RS} - \gamma_{ST-RS} &=&\sum_{i=1}^{3}(|\mathcal{S}^{E}_i| - |\mathcal{S}^{T}_i|)\\
        &\geq&|\mathcal{S}^{E}_2| - |\mathcal{S}^{T}_2|\geq 0.
    \end{eqnarray*}    

    If node $t$ in the first $\alpha - (n \bmod \alpha)$ nodes of every set transformation array (totally $(\alpha - (n \bmod \alpha))\lfloor\frac{n}{\alpha}\rfloor$ nodes), there must be at $n\bmod\alpha$ coupled but original symbols in the major row because the second symbol of the sets in sub-array $B_1$ is original. At this point, $$|\mathcal{S}^{T}_2| = k-(\lfloor\frac{n}{\alpha}\rfloor - 1 + (n\bmod\alpha))$$ and $$|\mathcal{S}^{E}_2| \geq k-(\lfloor\frac{n}{\alpha}\rfloor - 1 ).$$ Finally, we can get
    \begin{eqnarray*}
        \gamma_{ET-RS} - \gamma_{ST-RS} &=& (\sum_{i=1}^{3}(|\mathcal{S}^{E}_i| - |\mathcal{S}^{T}_i|)\\
        &\geq& |\mathcal{S}^{E}_2| - |\mathcal{S}^{T}_2|\\
        &\geq& n\bmod\alpha.
    \end{eqnarray*}
    
\end{proof}

\section{Proof of Theorem \ref{repairbandwidth_theorem}}\label{app2}
\begin{proof}
    From Lemma \ref{bandwidthproof1}, we can know that for the first $n - 2(n\bmod \alpha)$ nodes, ST-RS$(n, k, \alpha)$ has at least $(\alpha - n\bmod\alpha)\lfloor\frac{n}{\alpha}\rfloor (n\bmod\alpha) $ symbols repair bandwidth less than ET-RS$(n, k, \alpha)$. Recall that the average repair bandwidth ratio is defined as the ratio of the sum of the repair bandwidth of all nodes and the number of nodes times $k\alpha$, below it is only necessary to show that the total repair bandwidth of the $2(n\bmod \alpha)$ remaining nodes ST-RS$(n, k, \alpha)$ is at most $(\alpha - n\bmod\alpha)\lfloor\frac{n}{\alpha}\rfloor (n\bmod\alpha) $ symbols more than that of ET-RS$(n, k, \alpha)$.

    Like the proof process of Lemma \ref{bandwidthproof1}, suppose node $t$ failed, to recover node $t$, we discuss the three sets in the repair process in turn. 
    
    The first sets $\mathcal{S}_1^T$ and $\mathcal{S}_2^E$ are discussed first, it is easy to see that $|\mathcal{S}_1^T| = |\mathcal{S}_1^E| = k$;

    For the second sets, like the similar proof in Lemma \ref{bandwidthproof1}, in set $\mathcal{S}_1^{T}$, there are $\lfloor\frac{n}{\alpha}\rfloor - 1 + (n\bmod\alpha)$ symbols that do not need to download coupled symbols and the rest of symbols that need to download one coupled symbols. In set $\mathcal{S}_1^{E}$, there are $\lfloor\frac{n}{\alpha}\rfloor - 1$ symbols that do not need to download coupled symbols and the rest of the symbols that need to download one coupled symbols. We select $k$ elements with as few coupled symbols as possible. So there must be $|\mathcal{S}^{T}_2| \leq |\mathcal{S}^{E}_2|$. Finally, 
    \begin{eqnarray*}
        \gamma_{ET-RS} - \gamma_{ST-RS} &=& \sum_{i=1}^{3}(|\mathcal{S}^{E}_i| - |\mathcal{S}^{T}_i|)\\
        &\geq& |\mathcal{S}^{E}_2| - |\mathcal{S}^{T}_2| \geq 0.
    \end{eqnarray*}

    Finally, for the last sets, given that each pair of symbols in sub-array \(B_1\) corresponds to a single symbol in ST-RS\((n, k, \alpha)\), we possibly need to download two symbols to recover one symbol in sub-array $B_1$, and other symbols only need to download one symbol. Recall that the number of symbols in sub-array $B_1$ is $2(n\bmod\alpha)(\alpha-n\bmod\alpha)$. On the other hand, one lost symbol in ET-RS$(n, k, \alpha)$ at least needs to download one symbol to recover it. In total, for ST-RS$(n, k, \alpha)$, we at most download $2(n\bmod\alpha)(\alpha-n\bmod\alpha)$ symbols more than that of ET-RS$(n, k, \alpha)$ for the last set.

    In order to have a lower repair bandwidth ratio, we wish that $$(\alpha - n\bmod\alpha)\lfloor\frac{n}{\alpha}\rfloor (n\bmod\alpha) - 2(n\bmod\alpha)(\alpha-n\bmod\alpha)$$ $$\geq 0.$$
    then 
    $$(\lfloor\frac{n}{\alpha}\rfloor- 2)(n\bmod\alpha)(\alpha-n\bmod\alpha)\geq 0.$$
    If $n \bmod \alpha = 0$, then the inequality is constant, and it may be assumed that $n \bmod \alpha > 0$, while we must have $\alpha > n \bmod \alpha$.
    Simplify that we can get $$\lfloor\frac{n}{\alpha}\rfloor - 2  \geq 0,$$
    we only need $$\lfloor\frac{n}{\alpha}\rfloor \geq 2.$$ Because of $2k > n > 2r \geq 2\alpha$, the conclusion is clearly valid.
\end{proof}
\end{appendices}

\end{document}